\documentclass[aps,prl,twocolumn,superscriptaddress,longbibliography]{revtex4-2}
\usepackage{amsmath,amssymb,amsthm}
\usepackage{graphicx}
\usepackage{bm}
\usepackage{bbm}
\usepackage{physics}          
\usepackage[colorlinks=true,linkcolor=blue,citecolor=blue,urlcolor=magenta]{hyperref}
\usepackage{xcolor}
\usepackage{pdfpages}

\makeatletter
\AtBeginDocument{%
  \let\LS@rot\@undefined
}
\makeatother

\newtheorem{theorem}{Theorem}
\newcommand{\inlinesection}[1]{\textit{#1}---}

\begin{document}
\title{Analytic correspondence between multipartite entanglement and quantum phase transitions}
\author{Huynh Le Dan Linh}
\affiliation{University of Information Technology, Ho Chi Minh City, Vietnam}
\affiliation{Vietnam National University, Ho Chi Minh City, Vietnam}
\author{Vu Tuan Hai}
\affiliation{University of Information Technology, Ho Chi Minh City, Vietnam}
\affiliation{Vietnam National University, Ho Chi Minh City, Vietnam}
\author{Le Bin Ho}
\affiliation{Frontier Research Institute for Interdisciplinary Sciences, Tohoku University, Sendai 980-8578, Japan}
\affiliation{Department of Applied Physics, Graduate School of Engineering, Tohoku University, Sendai 980-8579, Japan}
\email{ho.bin.le.e3@tohoku.ac.jp}
\date{\today}

\begin{abstract}
We derive an analytic correspondence between multipartite concentratable entanglement (CE) and quantum phase transitions in one-dimensional quantum spin systems. We prove that CE shares the same analyticity structure as generalized order parameters, identifies the same quantum phases, and, for Gaussian ground states, is completely determined by the same single-particle correlation matrix. Numerical validation on the transverse-field Ising and generalized
cluster-Ising models confirms these analytical predictions for both
symmetry-breaking and symmetry-protected topological quantum phase
transitions. Since CE can be measured directly using a constant-depth parallelized SWAP-test circuit, our results establish CE as an experimentally accessible, model-independent probe of quantum phase transitions without requiring model-specific order parameters.
\end{abstract}
\maketitle

\inlinesection{Introduction}
Quantum phase transitions (QPTs) arise from nonanalytic changes in the many-body ground state driven by quantum fluctuations~\cite{Sachdev2011}. Characterizing quantum phases traditionally relies on phase-specific order parameters whose forms depend on the underlying physical mechanism. While Landau order parameters characterize spontaneous symmetry breaking (SSB), topological and symmetry protected topological (SPT) phases instead require nonlocal string order parameters or topological invariants~\cite{Wen2004,Chen2012,Verresen2017}. Although these descriptions take different forms, they all identify the same critical behavior that separates distinct quantum phases. Yet it remains unclear whether they can be understood within a single, model-independent framework.

Quantum entanglement offers an alternative perspective on many-body
phases and exhibits distinctive signatures near quantum
criticality~\cite{Osterloh2002,Vidal2003,RevModPhys.80.517}. These observations naturally suggest that multipartite entanglement may provide a unified
description of QPTs beyond phase-specific order
parameters. However, despite extensive studies, no analytical framework
establishing such a correspondence has been developed.

Recently, Beckey \emph{et al.}\ introduced concentratable entanglement
(CE), an operational multipartite entanglement measure that can be
efficiently estimated using a constant-depth parallelized SWAP
test~\cite{Beckey2021}. The framework has been extended to mixed
states~\cite{PhysRevA.107.062425} and to qudit and optical
systems~\cite{Foulds2024}, broadening its experimental applicability.
Meanwhile, other multipartite entanglement measures have also been
shown numerically to detect QPTs in specific spin
models~\cite{Li2024}. However, these studies remain empirical and
model-dependent, and a general analytical connection between CE and
quantum criticality remains unknown. In particular, no rigorous
correspondence between CE and conventional order parameters has been
established.

In this Letter, we derive such a correspondence by proving that CE and
generalized order parameters (GOPs) share the same analyticity
structure, distinguish the same quantum phases, and, for Gaussian
ground states, are explicit functions of the same single-particle
correlation matrix. These results provide a rigorous analytical foundation for CE as a model-independent probe of quantum criticality, offering a unified and experimentally accessible alternative to phase-specific order
parameters.

\inlinesection{Hamiltonians and quantum phases}
Consider a family of local spin-$1/2$ Hamiltonians
\(
H(\boldsymbol{\lambda})
=
\sum_{\alpha}
\lambda_{\alpha} h_{\alpha},
\)
acting on
$\mathcal{H}=(\mathbb{C}^{2})^{\otimes N}$
and analytic in the control parameters
$\boldsymbol{\lambda}\in\mathbb{R}^{d}$.
We assume a unique gapped ground state
$\ket{\psi_{0}(\boldsymbol{\lambda})}$
within each quantum phase. The critical manifold
$\mathcal{B}$
separates distinct phases in the thermodynamic limit, where the
assumption of a unique gapped ground state breaks down through gap
closing or ground-state degeneracy~\cite{Sachdev2011,Kato1995}.

For SPT phases, an additional notion
of phase equivalence is required. Two gapped phases are said to be
finite-depth local unitary (FDLU) inequivalent if no finite-depth local
unitary circuit connects their ground states while preserving the
gap~\cite{Chen2010}.

\inlinesection{Generalized order parameters}
We consider GOP of the form
\begin{equation}
\mathcal{O}(\lambda)
=
\frac{1}{N-k+1}
\sum_{i=1}^{N-k+1}
\langle \hat{Q}_{i}\rangle_{\lambda},
\label{eq:O_general}
\end{equation}
where 
$k$ is the length of the Pauli string,
and 
$
\hat{Q}_{i}
=
\bigotimes_{j=0}^{k-1}
P_{i+j},
$ with $P_{i+j}\in\{I,X,Y,Z\}$. This unified form includes conventional local, string, and cluster order parameters used to characterize quantum phases.

\inlinesection{Concentratable entanglement}
For an $N$-qubit pure state $\ket{\psi}$ with qubit set
$S=\{1,\ldots,N\}$, the CE associated
with a subset $s\subseteq S$ is defined as~\cite{Beckey2021}
\begin{equation}
\mathcal{C}(s)
=
1-
\frac{1}{2^{|s|}}
\sum_{\alpha\subseteq s}
\operatorname{Tr}\rho_{\alpha}^{2},
\label{eq:CE_def}
\end{equation}
where $\rho_{\alpha}$ is the reduced density matrix of subsystem
$\alpha$. The full-system CE, obtained by setting $s=S$, satisfies
\begin{equation}
\mathcal{C}(S)
=
1-
\frac{1}{2^{N}}
\sum_{\alpha\subseteq S}
\operatorname{Tr}\rho_{\alpha}^{2}
=
1-p(\mathbf{0}),
\label{eq:CE_full}
\end{equation}
where $p(\mathbf{0})$ is the probability of the all-zero outcome in
the parallelized SWAP test~\cite{Beckey2021}. Unless specified otherwise,
we write $\mathcal C$ for the full-system quantity
$\mathcal C(S)$. We use the following established properties:
$\mathcal{C}(s)\geq0$, with equality only for fully separable states;
$\mathcal{C}(s')\leq\mathcal{C}(s)$ for $s'\subseteq s$; and continuity
under variations of the state~\cite{Beckey2021}.

\begin{theorem}[Shared analyticity]
\label{thm:analyticity}
Let $H(\boldsymbol{\lambda})$ be a real-analytic family of local
Hamiltonians with a unique ground state
$\ket{\psi_0(\boldsymbol{\lambda})}$ separated from the rest of the
spectrum by a finite gap throughout an open connected region
$\mathcal{R}\subset\mathbb{R}^{d}$. Then both the GOP $\mathcal{O}(\boldsymbol{\lambda})$ and the full-system
CE
$\mathcal{C}(S;\boldsymbol{\lambda})$ are real-analytic on
$\mathcal{R}$. Consequently, a nonanalyticity of either quantity can
occur only where the ground state ceases to be isolated, including the
critical manifold $\mathcal{B}$ in the thermodynamic limit.
\end{theorem}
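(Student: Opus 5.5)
The natural route is through the ground-state projector $P_0(\boldsymbol{\lambda})=\ket{\psi_0(\boldsymbol{\lambda})}\bra{\psi_0(\boldsymbol{\lambda})}$. It has no phase ambiguity and is the single object from which both $\mathcal{O}$ and $\mathcal{C}$ are built.

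\textbf{Step 1: $P_0$ is analytic.} Fix $\boldsymbol{\lambda}_0\in\mathcal{R}$. By assumption $E_0(\boldsymbol{\lambda}_0)$ is a simple eigenvalue separated from the rest of the spectrum by a gap $\Delta>0$. Choose a contour $\Gamma$ in the complex plane around $E_0(\boldsymbol{\lambda}_0)$ at distance $\Delta/2$ from the rest of the spectrum, and write
\begin{equation}
P_0(\boldsymbol{\lambda})=\frac{1}{2\pi i}\oint_{\Gamma}\bigl(z-H(\boldsymbol{\lambda})\bigr)^{-1}\,dz .
\end{equation}
Since $H(\boldsymbol{\lambda})$ is real-analytic, it extends holomorphically to a complex neighbourhood of $\boldsymbol{\lambda}_0$. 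On that neighbourhood continuity keeps the spectrum off $\Gamma$. The resolvent is then jointly analytic in $(z,\boldsymbol{\lambda})$, so $P_0$ is real-analytic near $\boldsymbol{\lambda}_0$; this is Kato's analytic perturbation theory~\cite{Kato1995}. The rank of $P_0$ stays $1$, so $P_0$ really is the ground-state projector throughout $\mathcal{R}$. Because $\boldsymbol{\lambda}_0$ was arbitrary, analyticity holds on all of $\mathcal{R}$.

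\textbf{Step 2: the GOP.} Each term satisfies $\langle\hat Q_i\rangle_{\boldsymbol{\lambda}}=\operatorname{Tr}[P_0(\boldsymbol{\lambda})\hat Q_i]$, which is a linear functional of an analytic operator and hence analytic. The GOP in Eq.~\eqref{eq:O_general} is a finite average of such terms, so $\mathcal{O}$ is analytic on $\mathcal{R}$.

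\textbf{Step 3: the CE.} Each reduced state $\rho_\alpha=\operatorname{Tr}_{S\setminus\alpha}P_0$ is a linear image of $P_0$ and so is analytic. The purity $\operatorname{Tr}\rho_\alpha^2$ is a polynomial, namely a sum of products of matrix entries, hence analytic. Since $\mathcal{C}(S)$ in Eq.~\eqref{eq:CE_full} is a finite linear combination of these purities, it is real-analytic on $\mathcal{R}$.

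\textbf{Step 4: the consequence.} This is the contrapositive. At any point of nonanalyticity of $\mathcal{O}$ or $\mathcal{C}$, the hypotheses must fail, so either the gap closes or the ground state becomes degenerate. In the thermodynamic limit these are exactly the points of $\mathcal{B}$.

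The main obstacle is making the statement meaningful in the thermodynamic limit. For finite $N$ every step above is rigorous. As $N\to\infty$, however, $\mathcal{C}(S)$ involves $2^N$ terms and the gap assumption must hold uniformly in $N$. To deal with this I would state analyticity at each finite $N$ with a domain of holomorphy of complex width set by $\Delta$, which is independent of $N$ because $\Delta$ is uniform. For the intensive GOP, Vitali's theorem combined with locality (Lieb--Robinson bounds giving uniform bounds on the complex extension) then shows that analyticity passes to the limit. For CE, the last claim only requires the contrapositive at finite size together with the identification of $\mathcal{B}$ as the locus where the gap closes.
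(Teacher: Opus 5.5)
Your Steps 1--4 are correct and are essentially the paper's proof: analyticity of $P_0(\boldsymbol{\lambda})$ from Kato's analytic perturbation theory (you make the Riesz-projection argument explicit), the GOP as a linear functional of $P_0$, $\mathcal{C}(S)$ as a finite sum of purities of partial traces of $P_0$, and the final claim as the contrapositive. The paper works at fixed finite $N$ and does not attempt your thermodynamic-limit extension; note, though, that the $N$-independent holomorphy width you claim there does not follow from the Riesz argument, because $\partial_{\boldsymbol{\lambda}}H$ is extensive and the fixed-contour Neumann-series bound only gives a complex radius of order $\Delta/N$, so that part would need substantially more than the finite-$N$ argument.
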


\begin{proof}
Analytic perturbation theory~\cite{Kato1995,Reed1978} implies that the
ground-state projector
\(
P_0(\boldsymbol{\lambda})
=
\ket{\psi_0(\boldsymbol{\lambda})}
\bra{\psi_0(\boldsymbol{\lambda})}
\)
is real-analytic wherever the ground state remains nondegenerate and
spectrally isolated. Writing the GOP as
\(
\mathcal{O}(\boldsymbol{\lambda})
=
\operatorname{Tr}
\!\left[
\hat{\mathcal{O}}\,
P_0(\boldsymbol{\lambda})
\right],
\)
shows that $\mathcal{O}$ is analytic because it is a linear functional
of $P_0$. Likewise, each reduced density matrix
\(
\rho_{\alpha}(\boldsymbol{\lambda})
=
\operatorname{Tr}_{\bar{\alpha}}
P_0(\boldsymbol{\lambda})
\)
is analytic. Its purity
$\operatorname{Tr}\rho_{\alpha}^{2}(\boldsymbol{\lambda})$ is therefore
analytic, and so is the finite sum
\(
\mathcal{C}(S;\boldsymbol{\lambda})
=
1-\frac{1}{2^{N}}
\sum_{\alpha\subseteq S}
\operatorname{Tr}
\rho_{\alpha}^{2}(\boldsymbol{\lambda}).
\)
\end{proof}

Theorem~\ref{thm:analyticity} establishes a common analyticity domain
for $\mathcal{O}$ and $\mathcal{C}$, but does not by itself guarantee
that either quantity becomes singular at every point of
$\mathcal{B}$. We next identify conditions under which CE
distinguishes the phases separated by the transition.

\begin{theorem}[Phase discrimination by CE]
\label{thm:main}
Let $H(\boldsymbol{\lambda})$ possess
$\mathbb{Z}_2\times\mathbb{Z}_2$ symmetry with gapped
SPT and trivial phases separated by
the critical manifold $\mathcal{B}$. Then the full-system CE $\mathcal{C}(S)$ satisfies
\begin{align}
\mathcal C(S)
&\ge\frac14,
&&\text{throughout the SPT phase},
\\
\mathcal C(S)
&\rightarrow0,
&&\text{upon approaching the trivial phase.}
\end{align}
\end{theorem}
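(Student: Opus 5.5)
We need a plan for Theorem~\ref{thm:main}. The idea is to bound $\mathcal C(S)$ from below by a single cut and to exploit the fact that SPT ground states carry protected edge degrees of freedom, while trivial states are adiabatically connected to product states.

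\emph{Lower bound in the SPT phase.}
1. Start from definition~\eqref{eq:CE_def} and group the subsets $\alpha$ and their complements $\bar\alpha$. Purity is symmetric under complementation for a pure state, so
\begin{equation}
\mathcal C(S)=\frac{1}{2^{N}}\sum_{\alpha\subseteq S}\bigl(1-\operatorname{Tr}\rho_\alpha^2\bigr),
\end{equation}
which is the average linear entropy over all bipartitions.

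2. Use the standard characterization of a $\mathbb{Z}_2\times\mathbb{Z}_2$ SPT phase. By Chen--Gu--Wen, the symmetry acts projectively on the edges, so every symmetric state in the SPT phase has a degenerate entanglement spectrum with degeneracy at least $2$ across any contiguous cut. Moreover, the Schmidt spectrum of any cut separating the system into pieces with $m$ boundary points splits into multiplets of size at least $2^{m}$. This gives
\begin{equation}
1-\operatorname{Tr}\rho_\alpha^2\ge 1-2^{-m(\alpha)},
\end{equation}
where $m(\alpha)$ is the number of domain walls of $\alpha$ along the chain.

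3. Average over all $2^N$ subsets. A typical random subset has $m(\alpha)\approx N/2$, so the average of $1-2^{-m}$ is close to $1$ for large $N$. It is in particular at least $1/4$; at worst one keeps only the subsets with $m\ge1$, which make up all but $2$ of the $2^N$ subsets, and bounds each term by $1/2$. This yields $\mathcal C\ge\frac12(1-2^{1-N})\ge\frac14$ for $N\ge2$. I would state the bound $1/4$ as the weakest uniform one, which also covers small $N$ and open-versus-periodic boundary subtleties. For the latter, the periodic-chain bound can be checked against the cluster state, for which every cut carries at least one protected Bell-pair-like doublet.

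\emph{Vanishing upon approaching the trivial phase.}
4. In the trivial phase the ground state is FDLU-equivalent to a symmetric product state. Take the limit in which the Hamiltonian approaches its trivial fixed point, for example $\lambda\to\infty$ of the field term in the cluster-Ising model. There the ground state converges to the product state $|{+}\rangle^{\otimes N}$, or whatever that fixed point is. By continuity of CE (the stated property) and the fact that $\mathcal C=0$ exactly on fully separable states, $\mathcal C(S)\to0$. If a quantitative rate is wanted, a Lieb--Robinson/quasi-adiabatic argument gives purities $\operatorname{Tr}\rho_\alpha^2\ge1-O(\epsilon\,m(\alpha))$, with $\epsilon$ the deviation from the fixed point.

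\emph{Main obstacle.}
The hardest step is step~2: justifying a degeneracy bound for \emph{every} subset $\alpha$, including non-contiguous ones, uniformly throughout the phase. The protected degeneracy is a statement about Schmidt spectra of contiguous cuts, and it is robust only in the thermodynamic limit, via MPS/injective symmetric states. I would first try to reduce to MPS with projective symmetry action and bond dimension $\ge2$. Then I would argue that each interval of $\alpha$ is cut at two points and contributes an independent factor $2$ to the minimal Schmidt multiplicity, up to exponentially small corrections in the interval length. Short intervals are the weak point; those I would simply discard, since the bound $1/4$ leaves ample room.
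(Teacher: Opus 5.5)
\paragraph{Gap in steps 2--3.} The per-subset inequality $1-\operatorname{Tr}\rho_\alpha^2\ge 1-2^{-m(\alpha)}$ is false, and the failure is not confined to a few short intervals.

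\begin{itemize}
\item Already at the cluster fixed point, a single bulk site has $m=2$. You would need $1-\operatorname{Tr}\rho_i^2\ge 3/4$, which is impossible for a qubit, whose maximum is $1/2$.
\item For $j$ in the bulk, $\alpha=S\setminus\{j\}$ consists of long intervals with two domain walls. Yet $1-\operatorname{Tr}\rho_\alpha^2=1-\operatorname{Tr}\rho_j^2\le 1/2$. Edge doublets at different cuts behave independently only when both $\alpha$ and its complement are long near each cut. Even then this holds only up to corrections of order $e^{-\ell/\xi}$, which are not uniform as $\xi$ diverges at $\mathcal B$. For small $N$ there are no long pieces at all.
\item Your fallback, that every $\alpha$ with $m\ge1$ contributes at least $1/2$, also fails away from the fixed point. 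We have $1-\operatorname{Tr}\rho_i^2=\tfrac12(1-|\mathbf r_i|^2)$, and in the cluster-Ising SPT phase with $g_x\neq0$, $\langle X_i\rangle\neq0$ already at first order in $g_x/g_{zxz}$.
\item Discarding the bad subsets does not help. Under the uniform average over $2^N$ subsets, a typical $\alpha$ is a union of runs of mean length $2$. Subsets whose runs and gaps are all long therefore form an exponentially small fraction of the sum.
\item You also cannot bound $1-\operatorname{Tr}\rho_\alpha^2$ by the long pieces of $\alpha$ alone, because purity is not monotone under partial trace.
\end{itemize}

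So step~3 has no valid input, and the constant $1/4$ does not come out of this route.

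\paragraph{The paper's route.} The idea you are missing is monotonicity: $\mathcal C(s')\le\mathcal C(s)$ for $s'\subseteq s$, which the paper lists among the established CE properties. It discards all multi-site structure at once:
\[
\mathcal C(S)\ge\mathcal C(\{i\})=\tfrac12(1-\operatorname{Tr}\rho_i^2)=\tfrac14(1-|\mathbf r_i|^2).
\]
The paper then invokes the symmetry to set the Bloch vector $\mathbf r_i=0$. This gives exactly $1/4$, independent of $N$, boundary conditions, or correlation length. Your trivial-side argument (convergence to a product state at the fixed point plus continuity of CE) coincides with the paper's.

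One caution on the paper's route: it needs all three Bloch components to vanish. In the cluster-Ising realization of $\mathbb Z_2\times\mathbb Z_2$, generated by $\prod_{i\,\mathrm{even}}X_i$ and $\prod_{i\,\mathrm{odd}}X_i$, the operator $X_i$ commutes with both generators. So only $\langle Y_i\rangle=\langle Z_i\rangle=0$ is enforced, and without further input the same chain yields only $\mathcal C(S)\ge\tfrac14(1-\langle X_i\rangle^2)$.
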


\begin{proof}
Within the SPT phase,
$\mathbb{Z}_2\times\mathbb{Z}_2$ symmetry enforces
$\langle\sigma_i^\mu\rangle=0$
for every site and spin component (see Supplemental Material),
implying the maximally mixed one-site reduced state
$\rho_i=\mathbbm{1}_2/2$.
The Bloch decomposition then gives
\(
\mathcal{C}(\{i\})=\frac14.
\)
Monotonicity of CE under subsystem inclusion yields
\(
\mathcal{C}(S)
\ge
\mathcal{C}(\{i\})
=
\frac14.
\)

In the trivial phases, including the paramagnetic (PM) and
antiferromagnetic (AFM) phases, the ground state continuously
approaches a product state, for which
$\mathcal{C}(S)=0$.
Since CE is continuous in the trace norm,
\(
\mathcal{C}(S)\rightarrow0
\)
as the trivial limit is approached.
\end{proof}

Combining Theorems~\ref{thm:analyticity}
and~\ref{thm:main}, CE is analytic throughout each gapped phase while
distinguishing adjacent quantum phases. It therefore provides an
order-parameter-independent indicator of QPTs.

The same single-site argument applies equally to finite-size
representatives of SSB phases. Although such systems do not
realize
$\mathbb Z_2\times\mathbb Z_2$
SPT order, their symmetric finite-size ground states likewise possess
maximally mixed one-site reduced density matrices. Consequently, the
same argument yields the lower bound
$\mathcal{C}(S)\ge1/4$
throughout the symmetry-breaking phase, whereas
$\mathcal{C}(S)\rightarrow0$
in the product-state limit of the corresponding PM phase.

\begin{figure*}[t]
  \centering
  \includegraphics[width=\textwidth]{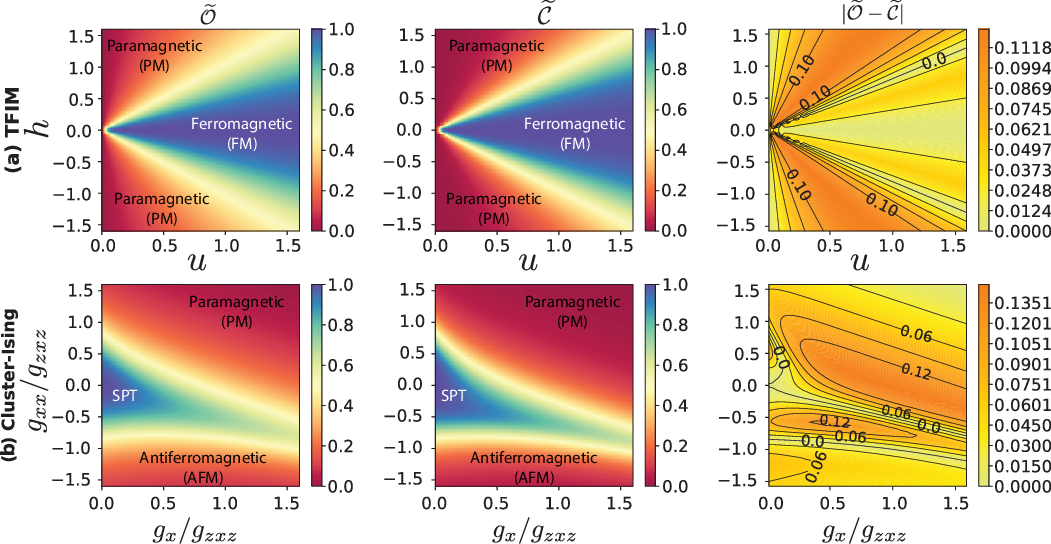}
\caption{
Comparison between the normalized GOP
$\widetilde{\mathcal O}$ (left), normalized CE
$\widetilde{\mathcal C}$ (center), and their absolute difference
$|\widetilde{\mathcal O}-\widetilde{\mathcal C}|$ (right) for two
representative one-dimensional spin models with $N=5$:
(a) the transverse-field Ising model,
and
(b) the generalized cluster-Ising model.
In all cases, CE accurately reproduces the phase boundaries identified
by the corresponding GOP. The normalized
root-mean-square deviations are approximately $7\%$ and
$8\%$, respectively. The remaining discrepancies are localized
predominantly near the critical manifold, while the agreement within
each gapped phase is excellent, consistent with
Theorem~\ref{thm:analyticity}.
}
  \label{fig1}
\end{figure*}

\inlinesection{Gaussian-state correspondence}
For Gaussian ground states of free-fermion Hamiltonians, the correspondence
between CE and GOP
becomes exact. The Gaussian ground state is completely characterized by
the single-particle correlation matrix
\(
\Gamma_{ij}
=
\langle
c_i^\dagger c_j
\rangle,
\)
from which every reduced density matrix is determined through~\cite{Peschel2003,Peschel2009}
\begin{align}
    \operatorname{Tr}\rho_s^2
=
\det\!\left[
\Gamma_s^2+
(\mathbbm{1}-\Gamma_s)^2
\right].
\end{align}
As a result, both the CE and GOP admit the same
correlation matrix, leading to the following theorem.

\begin{theorem}[Exact correspondence for Gaussian ground states]
\label{thm:ff}
For a Gaussian ground state of a free-fermion Hamiltonian, both the
full-system CE $\mathcal{C}(S)$ and every
GOP $\mathcal{O}$ are explicit analytic
functions of the same single-particle correlation matrix $\Gamma$:
\begin{align}
\mathcal{C}(S)
&=
1-
\frac{1}{2^{N}}
\sum_{s\subseteq S}
\det\!\left[
\Gamma_s^2+
(\mathbbm{1}-\Gamma_s)^2
\right],
\label{eq:CE_ff}
\\
\mathcal{O}
&=
f(\Gamma),
\label{eq:O_ff}
\end{align}
where $f$ is an analytic function determined by the corresponding Pauli
string. Consequently, $\mathcal{C}(S)$ and $\mathcal{O}$ become
nonanalytic on the same critical manifold.
\end{theorem}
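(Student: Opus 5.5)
The plan is to first pass from spins to fermions, then prove the two formulas separately, and finally read off the shared nonanalyticity from the common dependence on $\Gamma$.

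\textbf{Setup.} Map the spin chain to fermions with the Jordan--Wigner transformation, $c_j=\bigl(\prod_{l<j}Z_l\bigr)\sigma_j^-$. A free-fermion Hamiltonian has a Gaussian ground state. I will work in the number-conserving notation of the statement, $\Gamma_{ij}=\langle c_i^\dagger c_j\rangle$. If pairing terms are present, $\Gamma$ is replaced by the $2N\times 2N$ Majorana covariance matrix, and every step below goes through unchanged.

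\textbf{CE formula \eqref{eq:CE_ff}.} Take a subset $s$ of qubits. I would first argue that the spin reduced state on $s$ and the fermionic reduced state on the modes of $s$ have the same purity. For a contiguous block this is immediate, because the Jordan--Wigner strings inside the block are local unitaries. For a non-contiguous subset the strings cross the complement, and this is the main obstacle. My first attempt is to note that the map $\sigma\mapsto c$ restricted to operators supported on $s$ is a unitary relabelling of the even-parity sector, and that the ground state has definite fermion parity. I would then check whether $\operatorname{Tr}\rho_s^2=\sum_{P\text{ on }s}\langle P\rangle^2/2^{|s|}$ is preserved. If it is not preserved in general, I would restrict the statement to the fermionic subsystems, with the Jordan--Wigner ordering fixed, and flag this.

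Given this identification, Wick's theorem implies that $\rho_s$ is Gaussian, $\rho_s\propto e^{-\sum \epsilon_k n_k}$, where the occupations $\nu_k$ are the eigenvalues of $\Gamma_s$ (Peschel). The purity then factorizes as $\prod_k[\nu_k^2+(1-\nu_k)^2]=\det[\Gamma_s^2+(\mathbbm 1-\Gamma_s)^2]$. Summing over $s\subseteq S$ in \eqref{eq:CE_full} gives \eqref{eq:CE_ff}. Each term is a polynomial in the entries of $\Gamma$, so $\mathcal C(S)$ is a polynomial, and hence analytic, in $\Gamma$.

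\textbf{GOP formula \eqref{eq:O_ff}.} Write each $\hat Q_i$ through Jordan--Wigner as a product of Majorana operators $a_{2j-1}=c_j+c_j^\dagger$ and $a_{2j}=i(c_j^\dagger-c_j)$, up to a phase. This uses $Z_j=-ia_{2j-1}a_{2j}$, while $X_j$ and $Y_j$ carry strings that telescope across the window of length $k$. An odd-parity Pauli string has vanishing expectation in a parity eigenstate. For an even string, Wick's theorem gives $\langle\hat Q_i\rangle=\pm\operatorname{Pf}$ of the corresponding submatrix of two-point Majorana correlators, which are linear in $\Gamma$ (and in the anomalous block $F$ if present). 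A Pfaffian is a polynomial, so averaging over $i$ gives $\mathcal O=f(\Gamma)$ with $f$ polynomial and hence analytic.

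\textbf{Shared nonanalyticity.} Both $\mathcal C(S)$ and $\mathcal O$ are polynomials in the entries of $\Gamma(\boldsymbol\lambda)$. Therefore any nonanalyticity in $\boldsymbol\lambda$ must come from $\Gamma$ itself. In the thermodynamic limit, $\Gamma_{ij}=\int\frac{dk}{2\pi}e^{ik(i-j)}g(k;\boldsymbol\lambda)$, and this is analytic exactly when the single-particle gap stays open, consistent with Theorem~\ref{thm:analyticity}.

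The claim that both quantities are nonanalytic on \emph{the same} manifold requires more than the polynomial dependence, since a polynomial could accidentally cancel a singularity of $\Gamma$. I would state the conclusion as "singular only where $\Gamma$ is." Beyond that, I would argue genericity: a generic polynomial composed with a nonanalytic $\Gamma$ inherits the singularity. I expect this step, together with the non-contiguous Jordan--Wigner issue, to be where rigor is thinnest.
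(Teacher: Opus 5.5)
Your route is the same as the paper's: Jordan--Wigner, the Gaussian purity formula applied to every $s\subseteq S$, Wick/Pfaffians for the GOP, and then ``everything is a function of $\Gamma$.'' The two steps you flag as thin are exactly the ones the paper's proof asserts without argument, and neither can be closed as stated.

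\textbf{Non-contiguous subsets.} Your check comes out negative. Take $N=4$ and the Slater determinant $\ket{\psi}=\tfrac12(c_1^\dagger+c_3^\dagger)(c_2^\dagger+c_4^\dagger)\ket{\mathrm{vac}}$. It is the unique gapped ground state of $-(c_1^\dagger c_3+c_2^\dagger c_4+\mathrm{h.c.})$. For $s=\{1,3\}$ one has $\Gamma_s=\tfrac12\bigl(\begin{smallmatrix}1&1\\1&1\end{smallmatrix}\bigr)$. This is a projector, so $\det[\Gamma_s^2+(\mathbbm{1}-\Gamma_s)^2]=1$; the fermionic state factorizes between modes $\{1,3\}$ and $\{2,4\}$. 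In spin variables, the string of $c_3^\dagger$ through site~2 gives $\ket{\psi}\propto\ket{1100}+\ket{1001}-\ket{0110}+\ket{0011}$, with $1$ meaning occupied. Its reduced state on qubits $\{1,3\}$ is $\tfrac12(\ket{10}\bra{10}+\ket{01}\bra{01})$, which has purity $1/2$.

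So Eq.~\eqref{eq:CE_ff} is in general not the SWAP-test quantity of Eq.~\eqref{eq:CE_full} once $s$ and $\bar s$ are both non-contiguous. The $N=5$ chains of Fig.~\ref{fig1} contain such subsets, e.g.\ $s=\{1,3\}$. If $s$ or $\bar s$ is contiguous, the two purities do agree by pure-state complementarity.

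Your fallback of restricting to fermionic subsystems changes the measured quantity. The right repair is the tool you already use for the GOP. Write $\operatorname{Tr}\rho_\alpha^2=2^{-|\alpha|}\sum_{P\text{ on }\alpha}\langle P\rangle^2$. Then map each Pauli string, including strings running through $\bar\alpha$, to a Majorana monomial and evaluate it as a Pfaffian. This shows the spin $\mathcal C(S)$ is a polynomial in the Majorana covariance matrix $M$. That rescues ``analytic function of the same correlation data,'' but not the explicit form \eqref{eq:CE_ff}.

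A related point: for the TFIM and cluster-Ising chains, the pairing terms make $\langle c_i^\dagger c_j\rangle$ insufficient. In terms of $M$ the fermionic purity is $\det(\mathbbm{1}+M_s)/2^{|s|}$, so the formula does not carry over ``unchanged.''

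\textbf{The final clause.} You are right, and the paper's proof makes exactly the step you decline. Polynomial dependence on $\Gamma$ only shows that $\mathcal C(S)$ and $\mathcal O$ are analytic wherever $\Gamma$ is, which is Theorem~\ref{thm:analyticity} again. The paper's step ``the gaps close on the same manifold, hence both are nonanalytic there'' is the converse. A genericity heuristic does not prove it for these specific polynomials.

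For the full-system CE the clause is in fact generically false in the only regime where it has content. Genuine nonanalyticity requires $N\to\infty$. Write $p(\mathbf 0)=\langle\prod_j\Pi^{\mathrm{sym}}_j\rangle_{\psi\otimes\psi}$ with $\Pi^{\mathrm{sym}}_j=(\mathbbm{1}+\mathrm{SWAP}_j)/2$. These are commuting projectors, so $p(\mathbf 0)\le\langle\prod_{j\in T}\Pi^{\mathrm{sym}}_j\rangle$ for any $T\subseteq S$. Take $T$ to be $\sim N/\xi$ well-separated sites in a clustering state with mixed one-site marginals; then this bound decays exponentially. The cat-like symmetric ground states of an SSB phase can be handled branch by branch.

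Hence $\mathcal C(S)\to1$ generically throughout the phases on both sides of the transition, and there is no singularity to match with that of $\mathcal O$. What you can actually prove is your weakened statement: either quantity can be singular only where $\Gamma$ is.
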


\begin{proof}
Equation~\eqref{eq:CE_ff} follows directly from the Gaussian purity
formula (see Supplemental Material) together with the definition of
CE in Eq.~\eqref{eq:CE_def}. Under the
Jordan--Wigner transformation, every Pauli string is mapped to a
fermionic operator. By Wick's theorem, its expectation value in a
Gaussian ground state is an analytic function of the correlation matrix
$\Gamma$~\cite{Bravyi2005}. Hence every GOP can
be written as the analytic function $f(\Gamma)$. Since $\Gamma$ is
analytic whenever the single-particle gap remains open, both
$\mathcal{C}(S)$ and $\mathcal{O}$ are analytic throughout each gapped
phase. The single-particle and many-body gaps close on the same
critical manifold, implying that both quantities become nonanalytic at
the same transition.
\end{proof}

\inlinesection{Numerical validation}
We validate the theoretical predictions on two representative
one-dimensional spin models spanning SSB and
SPT phases: the
transverse-field Ising model (TFIM) and the generalized cluster-Ising model. In each case, the
full-system CE is evaluated as
$\mathcal C(S)=1-p(\mathbf0)$ using the parallelized SWAP-test
expression of Eq.~\eqref{eq:CE_full}. Figure~\ref{fig1} compares the
normalized GOP
$\widetilde{\mathcal O}$,
the normalized CE
$\widetilde{\mathcal C}$,
and their absolute difference over the corresponding phase diagrams.

Across all these models, CE faithfully reproduces the phase boundaries
identified by the GOP. The agreement is
essentially exact for free-fermion systems, where deviations are concentrated primarily near
the critical manifold rather than within the bulk phases.

\inlinesection{Transverse-field Ising model}
The transverse-field Ising model (TFIM) is the paradigmatic exactly
solvable model of one-dimensional quantum criticality and the canonical
example of an SSB transition~\cite{PhysRevLett.25.443,Sachdev2011}.
Its Hamiltonian is
\begin{equation}
H_{\mathrm{TFIM}}
=
-u\sum_{j=1}^{N-1}Z_jZ_{j+1}
-
h\sum_{j=1}^{N}X_j,
\end{equation}
which is invariant under the global $\mathbb Z_2$ spin-flip symmetry.
In the thermodynamic limit, the competition between the interaction
and transverse field gives rise to a ferromagnetic (FM) phase
($|u|>|h|$) where the $\mathbb Z_2$ symmetry is spontaneously broken,
and a PM phase ($|u|<|h|$) where the symmetry remains
preserved. The corresponding GOP is
\[
\mathcal O
=
\frac{1}{N-1}
\sum_{i=1}^{N-1}
\langle Z_iZ_{i+1}\rangle .
\]

Since the model is exactly solvable by free fermions, Theorem~\ref{thm:ff}
applies directly. Exact diagonalization for $N=5$ on a $64\times64$
parameter grid over $h\in[-1.6,1.6]$ and $u\in[0.0,1.6]$ yields a normalized RMS
deviation of approximately $7\%$. 
The critical lines
$|u|=|h|$ 
separating the FM and PM
phases are reproduced almost identically by
$\widetilde{\mathcal O}$
and
$\widetilde{\mathcal C}$, with noticeable discrepancies appearing only
in the immediate vicinity of the transition
[Fig.~\ref{fig1}(a)].

\inlinesection{Generalized cluster-Ising model}
The generalized cluster-Ising model provides a paradigmatic example of
an SPT phase, where the ground state is
characterized by nonlocal string order rather than SSB~\cite{Verresen2017,Smacchia2011, Cong2019}. Its Hamiltonian is
\begin{align}
H
=
&-g_{zxz}
\sum_{i=1}^{N-2}
Z_iX_{i+1}Z_{i+2}\nonumber\\
&-
g_x
\sum_{i=1}^{N}
X_i
-
g_{xx}
\sum_{i=1}^{N-1}
X_iX_{i+1},
\end{align}
with the GOP is 
\[
\mathcal O
=
\frac{1}{N-2}
\sum_{i=1}^{N-2}
\langle
Z_iX_{i+1}Z_{i+2}
\rangle,
\]
that distinguishes the SPT phase from the surrounding trivial PM and AFM phases~\cite{Cong2019}.

Along the exactly solvable line
$g_{xx}=0$,
Theorem~\ref{thm:ff} predicts an exact correspondence through the
Gaussian correlation matrix.
Exact diagonalization for $N=5$
gives a normalized RMS deviation of approximately $8\%$ over the
$(g_x,g_{xx})$ parameter plane.
The boundary separating the SPT phase from the trivial PM and AFM phases is accurately reproduced by CE, again with discrepancies localized near the critical manifold
[Fig.~\ref{fig1}(b)].

The numerical results consistently support the theoretical framework.
The agreement is strongest for exactly solvable free-fermion models,
where Theorem~\ref{thm:ff} predicts an exact correspondence.
Across all these examples, the differences remain concentrated near the
critical manifold, confirming that the CE captures
the same quantum-critical structure as the corresponding GOP.

\inlinesection{Conclusion}
We have established an analytic correspondence between multipartite
entanglement and QPTs in one-dimensional quantum
spin systems. Specifically, we have shown that CE shares the same analyticity structure as GOP, distinguishes the same quantum phases, and, for Gaussian
ground states, becomes an explicit function of the same single-particle
correlation matrix. Numerical results for representative symmetry-breaking and
symmetry-protected topological models further support these analytical
predictions.

Beyond the specific models considered here,
our results demonstrate that CE
can serve not only as a multipartite entanglement measure but also as a
unified probe of quantum criticality. Unlike conventional approaches,
which require constructing model-dependent order parameters, the same
experimentally measurable quantity applies across distinct classes of
QPTs. Since CE can be measured directly using the
parallelized SWAP test~\cite{Beckey2021}, the present framework opens a
practical route toward experimentally identifying QPTs on programmable quantum devices.

Natural extensions include mixed states, finite-temperature QPTs, higher-dimensional systems, and topological phases
characterized by nonlocal invariants, and experimental implementation
on superconducting and trapped-ion quantum processors.

\inlinesection{Acknowledgments} This work is supported by the Tohoku Initiative for Fostering Global Researchers for Interdisciplinary Sciences (TI-FRIS) of MEXT's Strategic Professional Development Program for Young Researchers. V.T.H.\ is partially supported by the VNUHCM--University of Information Technology's Scientific Research Support Fund.

\inlinesection{Data availability}
There are no publicly available
research data or software supporting this manuscript.
Requests for further information or data should be sent
to the authors.

\bibliography{refs}

\begin{thebibliography}{20}%
\makeatletter
\providecommand \@ifxundefined [1]{%
 \@ifx{#1\undefined}
}%
\providecommand \@ifnum [1]{%
 \ifnum #1\expandafter \@firstoftwo
 \else \expandafter \@secondoftwo
 \fi
}%
\providecommand \@ifx [1]{%
 \ifx #1\expandafter \@firstoftwo
 \else \expandafter \@secondoftwo
 \fi
}%
\providecommand \natexlab [1]{#1}%
\providecommand \enquote  [1]{``#1''}%
\providecommand \bibnamefont  [1]{#1}%
\providecommand \bibfnamefont [1]{#1}%
\providecommand \citenamefont [1]{#1}%
\providecommand \href@noop [0]{\@secondoftwo}%
\providecommand \href [0]{\begingroup \@sanitize@url \@href}%
\providecommand \@href[1]{\@@startlink{#1}\@@href}%
\providecommand \@@href[1]{\endgroup#1\@@endlink}%
\providecommand \@sanitize@url [0]{\catcode `\\12\catcode `\$12\catcode
  `\&12\catcode `\#12\catcode `\^12\catcode `\_12\catcode `\%12\relax}%
\providecommand \@@startlink[1]{}%
\providecommand \@@endlink[0]{}%
\providecommand \url  [0]{\begingroup\@sanitize@url \@url }%
\providecommand \@url [1]{\endgroup\@href {#1}{\urlprefix }}%
\providecommand \urlprefix  [0]{URL }%
\providecommand \Eprint [0]{\href }%
\providecommand \doibase [0]{https://doi.org/}%
\providecommand \selectlanguage [0]{\@gobble}%
\providecommand \bibinfo  [0]{\@secondoftwo}%
\providecommand \bibfield  [0]{\@secondoftwo}%
\providecommand \translation [1]{[#1]}%
\providecommand \BibitemOpen [0]{}%
\providecommand \bibitemStop [0]{}%
\providecommand \bibitemNoStop [0]{.\EOS\space}%
\providecommand \EOS [0]{\spacefactor3000\relax}%
\providecommand \BibitemShut  [1]{\csname bibitem#1\endcsname}%
\let\auto@bib@innerbib\@empty
\bibitem [{\citenamefont {Sachdev}(2011)}]{Sachdev2011}%
  \BibitemOpen
  \bibfield  {author} {\bibinfo {author} {\bibfnamefont {S.}~\bibnamefont
  {Sachdev}},\ }\href {https://doi.org/10.1017/CBO9780511973765} {\emph
  {\bibinfo {title} {Quantum Phase Transitions}}},\ \bibinfo {edition} {2nd}\
  ed.\ (\bibinfo  {publisher} {Cambridge University Press},\ \bibinfo {address}
  {Cambridge, UK},\ \bibinfo {year} {2011})\BibitemShut {NoStop}%
\bibitem [{\citenamefont {Wen}(2004)}]{Wen2004}%
  \BibitemOpen
  \bibfield  {author} {\bibinfo {author} {\bibfnamefont {X.-G.}\ \bibnamefont
  {Wen}},\ }\href@noop {} {\emph {\bibinfo {title} {Quantum Field Theory of
  Many-Body Systems: From the Origin of Sound to an Origin of Light and
  Electrons}}},\ Oxford Graduate Texts\ (\bibinfo  {publisher} {Oxford
  University Press},\ \bibinfo {address} {Oxford},\ \bibinfo {year}
  {2004})\BibitemShut {NoStop}%
\bibitem [{\citenamefont {Chen}\ \emph {et~al.}(2013)\citenamefont {Chen},
  \citenamefont {Gu}, \citenamefont {Liu},\ and\ \citenamefont
  {Wen}}]{Chen2012}%
  \BibitemOpen
  \bibfield  {author} {\bibinfo {author} {\bibfnamefont {X.}~\bibnamefont
  {Chen}}, \bibinfo {author} {\bibfnamefont {Z.-C.}\ \bibnamefont {Gu}},
  \bibinfo {author} {\bibfnamefont {Z.-X.}\ \bibnamefont {Liu}},\ and\ \bibinfo
  {author} {\bibfnamefont {X.-G.}\ \bibnamefont {Wen}},\ }\bibfield  {title}
  {\bibinfo {title} {Symmetry protected topological orders and the group
  cohomology of their symmetry group},\ }\href
  {https://doi.org/10.1103/PhysRevB.87.155114} {\bibfield  {journal} {\bibinfo
  {journal} {Phys. Rev. B}\ }\textbf {\bibinfo {volume} {87}},\ \bibinfo
  {pages} {155114} (\bibinfo {year} {2013})}\BibitemShut {NoStop}%
\bibitem [{\citenamefont {Verresen}\ \emph {et~al.}(2017)\citenamefont
  {Verresen}, \citenamefont {Moessner},\ and\ \citenamefont
  {Pollmann}}]{Verresen2017}%
  \BibitemOpen
  \bibfield  {author} {\bibinfo {author} {\bibfnamefont {R.}~\bibnamefont
  {Verresen}}, \bibinfo {author} {\bibfnamefont {R.}~\bibnamefont {Moessner}},\
  and\ \bibinfo {author} {\bibfnamefont {F.}~\bibnamefont {Pollmann}},\
  }\bibfield  {title} {\bibinfo {title} {One-dimensional symmetry protected
  topological phases and their transitions},\ }\href
  {https://doi.org/10.1103/PhysRevB.96.165124} {\bibfield  {journal} {\bibinfo
  {journal} {Phys. Rev. B}\ }\textbf {\bibinfo {volume} {96}},\ \bibinfo
  {pages} {165124} (\bibinfo {year} {2017})}\BibitemShut {NoStop}%
\bibitem [{\citenamefont {Osterloh}\ \emph {et~al.}(2002)\citenamefont
  {Osterloh}, \citenamefont {Amico}, \citenamefont {Falci},\ and\ \citenamefont
  {Fazio}}]{Osterloh2002}%
  \BibitemOpen
  \bibfield  {author} {\bibinfo {author} {\bibfnamefont {A.}~\bibnamefont
  {Osterloh}}, \bibinfo {author} {\bibfnamefont {L.}~\bibnamefont {Amico}},
  \bibinfo {author} {\bibfnamefont {G.}~\bibnamefont {Falci}},\ and\ \bibinfo
  {author} {\bibfnamefont {R.}~\bibnamefont {Fazio}},\ }\bibfield  {title}
  {\bibinfo {title} {Scaling of entanglement close to a quantum phase
  transition},\ }\href {https://doi.org/10.1038/416608a} {\bibfield  {journal}
  {\bibinfo  {journal} {Nature}\ }\textbf {\bibinfo {volume} {416}},\ \bibinfo
  {pages} {608} (\bibinfo {year} {2002})}\BibitemShut {NoStop}%
\bibitem [{\citenamefont {Vidal}\ \emph {et~al.}(2003)\citenamefont {Vidal},
  \citenamefont {Latorre}, \citenamefont {Rico},\ and\ \citenamefont
  {Kitaev}}]{Vidal2003}%
  \BibitemOpen
  \bibfield  {author} {\bibinfo {author} {\bibfnamefont {G.}~\bibnamefont
  {Vidal}}, \bibinfo {author} {\bibfnamefont {J.~I.}\ \bibnamefont {Latorre}},
  \bibinfo {author} {\bibfnamefont {E.}~\bibnamefont {Rico}},\ and\ \bibinfo
  {author} {\bibfnamefont {A.}~\bibnamefont {Kitaev}},\ }\bibfield  {title}
  {\bibinfo {title} {Entanglement in quantum critical phenomena},\ }\href
  {https://doi.org/10.1103/PhysRevLett.90.227902} {\bibfield  {journal}
  {\bibinfo  {journal} {Phys. Rev. Lett.}\ }\textbf {\bibinfo {volume} {90}},\
  \bibinfo {pages} {227902} (\bibinfo {year} {2003})}\BibitemShut {NoStop}%
\bibitem [{\citenamefont {Amico}\ \emph {et~al.}(2008)\citenamefont {Amico},
  \citenamefont {Fazio}, \citenamefont {Osterloh},\ and\ \citenamefont
  {Vedral}}]{RevModPhys.80.517}%
  \BibitemOpen
  \bibfield  {author} {\bibinfo {author} {\bibfnamefont {L.}~\bibnamefont
  {Amico}}, \bibinfo {author} {\bibfnamefont {R.}~\bibnamefont {Fazio}},
  \bibinfo {author} {\bibfnamefont {A.}~\bibnamefont {Osterloh}},\ and\
  \bibinfo {author} {\bibfnamefont {V.}~\bibnamefont {Vedral}},\ }\bibfield
  {title} {\bibinfo {title} {Entanglement in many-body systems},\ }\href
  {https://doi.org/10.1103/RevModPhys.80.517} {\bibfield  {journal} {\bibinfo
  {journal} {Rev. Mod. Phys.}\ }\textbf {\bibinfo {volume} {80}},\ \bibinfo
  {pages} {517} (\bibinfo {year} {2008})}\BibitemShut {NoStop}%
\bibitem [{\citenamefont {Beckey}\ \emph {et~al.}(2021)\citenamefont {Beckey},
  \citenamefont {Gigena}, \citenamefont {Coles},\ and\ \citenamefont
  {Cerezo}}]{Beckey2021}%
  \BibitemOpen
  \bibfield  {author} {\bibinfo {author} {\bibfnamefont {J.~L.}\ \bibnamefont
  {Beckey}}, \bibinfo {author} {\bibfnamefont {N.}~\bibnamefont {Gigena}},
  \bibinfo {author} {\bibfnamefont {P.~J.}\ \bibnamefont {Coles}},\ and\
  \bibinfo {author} {\bibfnamefont {M.}~\bibnamefont {Cerezo}},\ }\bibfield
  {title} {\bibinfo {title} {Computable and operationally meaningful
  multipartite entanglement measures},\ }\href
  {https://doi.org/10.1103/PhysRevLett.127.140501} {\bibfield  {journal}
  {\bibinfo  {journal} {Phys. Rev. Lett.}\ }\textbf {\bibinfo {volume} {127}},\
  \bibinfo {pages} {140501} (\bibinfo {year} {2021})}\BibitemShut {NoStop}%
\bibitem [{\citenamefont {Beckey}\ \emph {et~al.}(2023)\citenamefont {Beckey},
  \citenamefont {Pelegr\'{\i}}, \citenamefont {Foulds},\ and\ \citenamefont
  {Pearson}}]{PhysRevA.107.062425}%
  \BibitemOpen
  \bibfield  {author} {\bibinfo {author} {\bibfnamefont {J.~L.}\ \bibnamefont
  {Beckey}}, \bibinfo {author} {\bibfnamefont {G.}~\bibnamefont
  {Pelegr\'{\i}}}, \bibinfo {author} {\bibfnamefont {S.}~\bibnamefont
  {Foulds}},\ and\ \bibinfo {author} {\bibfnamefont {N.~J.}\ \bibnamefont
  {Pearson}},\ }\bibfield  {title} {\bibinfo {title} {Multipartite entanglement
  measures via bell-basis measurements},\ }\href
  {https://doi.org/10.1103/PhysRevA.107.062425} {\bibfield  {journal} {\bibinfo
   {journal} {Phys. Rev. A}\ }\textbf {\bibinfo {volume} {107}},\ \bibinfo
  {pages} {062425} (\bibinfo {year} {2023})}\BibitemShut {NoStop}%
\bibitem [{\citenamefont {Foulds}\ \emph {et~al.}(2024)\citenamefont {Foulds},
  \citenamefont {Prove},\ and\ \citenamefont {Kendon}}]{Foulds2024}%
  \BibitemOpen
  \bibfield  {author} {\bibinfo {author} {\bibfnamefont {S.}~\bibnamefont
  {Foulds}}, \bibinfo {author} {\bibfnamefont {O.}~\bibnamefont {Prove}},\ and\
  \bibinfo {author} {\bibfnamefont {V.}~\bibnamefont {Kendon}},\ }\bibfield
  {title} {\bibinfo {title} {Generalizing multipartite concentratable
  entanglement for practical applications: mixed, qudit and optical states},\
  }\href {https://doi.org/10.1098/rsta.2024.0411} {\bibfield  {journal}
  {\bibinfo  {journal} {Philosophical Transactions of the Royal Society A:
  Mathematical, Physical and Engineering Sciences}\ }\textbf {\bibinfo {volume}
  {382}},\ \bibinfo {pages} {20240411} (\bibinfo {year} {2024})}\BibitemShut
  {NoStop}%
\bibitem [{\citenamefont {Li}\ \emph {et~al.}(2024)\citenamefont {Li},
  \citenamefont {Zhou}, \citenamefont {Zhang}, \citenamefont {Bai},\ and\
  \citenamefont {Lin}}]{Li2024}%
  \BibitemOpen
  \bibfield  {author} {\bibinfo {author} {\bibfnamefont {Y.-C.}\ \bibnamefont
  {Li}}, \bibinfo {author} {\bibfnamefont {Y.-H.}\ \bibnamefont {Zhou}},
  \bibinfo {author} {\bibfnamefont {Y.}~\bibnamefont {Zhang}}, \bibinfo
  {author} {\bibfnamefont {Y.-K.}\ \bibnamefont {Bai}},\ and\ \bibinfo {author}
  {\bibfnamefont {H.-Q.}\ \bibnamefont {Lin}},\ }\bibfield  {title} {\bibinfo
  {title} {Multipartite entanglement serves as a faithful detector for quantum
  phase transitions},\ }\href {https://doi.org/10.1088/1367-2630/ad273a}
  {\bibfield  {journal} {\bibinfo  {journal} {New Journal of Physics}\ }\textbf
  {\bibinfo {volume} {26}},\ \bibinfo {pages} {023031} (\bibinfo {year}
  {2024})}\BibitemShut {NoStop}%
\bibitem [{\citenamefont {Kato}(1995)}]{Kato1995}%
  \BibitemOpen
  \bibfield  {author} {\bibinfo {author} {\bibfnamefont {T.}~\bibnamefont
  {Kato}},\ }\href {https://doi.org/10.1007/978-3-642-66282-9} {\emph {\bibinfo
  {title} {Perturbation Theory for Linear Operators}}},\ \bibinfo {edition}
  {2nd}\ ed.,\ Classics in Mathematics\ (\bibinfo  {publisher} {Springer},\
  \bibinfo {address} {Berlin, Heidelberg},\ \bibinfo {year} {1995})\BibitemShut
  {NoStop}%
\bibitem [{\citenamefont {Chen}\ \emph {et~al.}(2010)\citenamefont {Chen},
  \citenamefont {Gu},\ and\ \citenamefont {Wen}}]{Chen2010}%
  \BibitemOpen
  \bibfield  {author} {\bibinfo {author} {\bibfnamefont {X.}~\bibnamefont
  {Chen}}, \bibinfo {author} {\bibfnamefont {Z.-C.}\ \bibnamefont {Gu}},\ and\
  \bibinfo {author} {\bibfnamefont {X.-G.}\ \bibnamefont {Wen}},\ }\bibfield
  {title} {\bibinfo {title} {Local unitary transformation, long-range quantum
  entanglement, wave function renormalization, and topological order},\ }\href
  {https://doi.org/10.1103/PhysRevB.82.155138} {\bibfield  {journal} {\bibinfo
  {journal} {Phys. Rev. B}\ }\textbf {\bibinfo {volume} {82}},\ \bibinfo
  {pages} {155138} (\bibinfo {year} {2010})}\BibitemShut {NoStop}%
\bibitem [{\citenamefont {Reed}\ and\ \citenamefont {Simon}(1978)}]{Reed1978}%
  \BibitemOpen
  \bibfield  {author} {\bibinfo {author} {\bibfnamefont {M.}~\bibnamefont
  {Reed}}\ and\ \bibinfo {author} {\bibfnamefont {B.}~\bibnamefont {Simon}},\
  }\href@noop {} {\emph {\bibinfo {title} {Methods of Modern Mathematical
  Physics. Vol. IV: Analysis of Operators}}},\ \bibinfo {series} {Methods of
  Modern Mathematical Physics}, Vol.~\bibinfo {volume} {4}\ (\bibinfo
  {publisher} {Academic Press},\ \bibinfo {address} {New York},\ \bibinfo
  {year} {1978})\BibitemShut {NoStop}%
\bibitem [{\citenamefont {Peschel}(2003)}]{Peschel2003}%
  \BibitemOpen
  \bibfield  {author} {\bibinfo {author} {\bibfnamefont {I.}~\bibnamefont
  {Peschel}},\ }\bibfield  {title} {\bibinfo {title} {Calculation of reduced
  density matrices from correlation functions},\ }\href
  {https://doi.org/10.1088/0305-4470/36/14/101} {\bibfield  {journal} {\bibinfo
   {journal} {Journal of Physics A: Mathematical and General}\ }\textbf
  {\bibinfo {volume} {36}},\ \bibinfo {pages} {L205} (\bibinfo {year}
  {2003})}\BibitemShut {NoStop}%
\bibitem [{\citenamefont {Peschel}\ and\ \citenamefont
  {Eisler}(2009)}]{Peschel2009}%
  \BibitemOpen
  \bibfield  {author} {\bibinfo {author} {\bibfnamefont {I.}~\bibnamefont
  {Peschel}}\ and\ \bibinfo {author} {\bibfnamefont {V.}~\bibnamefont
  {Eisler}},\ }\bibfield  {title} {\bibinfo {title} {Reduced density matrices
  and entanglement entropy in free lattice models},\ }\href
  {https://doi.org/10.1088/1751-8113/42/50/504003} {\bibfield  {journal}
  {\bibinfo  {journal} {Journal of Physics A: Mathematical and Theoretical}\
  }\textbf {\bibinfo {volume} {42}},\ \bibinfo {pages} {504003} (\bibinfo
  {year} {2009})}\BibitemShut {NoStop}%
\bibitem [{\citenamefont {Bravyi}(2005)}]{Bravyi2005}%
  \BibitemOpen
  \bibfield  {author} {\bibinfo {author} {\bibfnamefont {S.}~\bibnamefont
  {Bravyi}},\ }\bibfield  {title} {\bibinfo {title} {Lagrangian representation
  for fermionic linear optics},\ }\href@noop {} {\bibfield  {journal} {\bibinfo
   {journal} {Quantum Info. Comput.}\ }\textbf {\bibinfo {volume} {5}},\
  \bibinfo {pages} {216–238} (\bibinfo {year} {2005})}\BibitemShut {NoStop}%
\bibitem [{\citenamefont {Elliott}\ \emph {et~al.}(1970)\citenamefont
  {Elliott}, \citenamefont {Pfeuty},\ and\ \citenamefont
  {Wood}}]{PhysRevLett.25.443}%
  \BibitemOpen
  \bibfield  {author} {\bibinfo {author} {\bibfnamefont {R.~J.}\ \bibnamefont
  {Elliott}}, \bibinfo {author} {\bibfnamefont {P.}~\bibnamefont {Pfeuty}},\
  and\ \bibinfo {author} {\bibfnamefont {C.}~\bibnamefont {Wood}},\ }\bibfield
  {title} {\bibinfo {title} {Ising model with a transverse field},\ }\href
  {https://doi.org/10.1103/PhysRevLett.25.443} {\bibfield  {journal} {\bibinfo
  {journal} {Phys. Rev. Lett.}\ }\textbf {\bibinfo {volume} {25}},\ \bibinfo
  {pages} {443} (\bibinfo {year} {1970})}\BibitemShut {NoStop}%
\bibitem [{\citenamefont {Smacchia}\ \emph {et~al.}(2011)\citenamefont
  {Smacchia}, \citenamefont {Amico}, \citenamefont {Facchi}, \citenamefont
  {Fazio}, \citenamefont {Florio}, \citenamefont {Pascazio},\ and\
  \citenamefont {Vedral}}]{Smacchia2011}%
  \BibitemOpen
  \bibfield  {author} {\bibinfo {author} {\bibfnamefont {P.}~\bibnamefont
  {Smacchia}}, \bibinfo {author} {\bibfnamefont {L.}~\bibnamefont {Amico}},
  \bibinfo {author} {\bibfnamefont {P.}~\bibnamefont {Facchi}}, \bibinfo
  {author} {\bibfnamefont {R.}~\bibnamefont {Fazio}}, \bibinfo {author}
  {\bibfnamefont {G.}~\bibnamefont {Florio}}, \bibinfo {author} {\bibfnamefont
  {S.}~\bibnamefont {Pascazio}},\ and\ \bibinfo {author} {\bibfnamefont
  {V.}~\bibnamefont {Vedral}},\ }\bibfield  {title} {\bibinfo {title}
  {Statistical mechanics of the cluster ising model},\ }\href
  {https://doi.org/10.1103/PhysRevA.84.022304} {\bibfield  {journal} {\bibinfo
  {journal} {Phys.\ Rev.\ A}\ }\textbf {\bibinfo {volume} {84}},\ \bibinfo
  {pages} {022304} (\bibinfo {year} {2011})}\BibitemShut {NoStop}%
\bibitem [{\citenamefont {Cong}\ \emph {et~al.}(2019)\citenamefont {Cong},
  \citenamefont {Choi},\ and\ \citenamefont {Lukin}}]{Cong2019}%
  \BibitemOpen
  \bibfield  {author} {\bibinfo {author} {\bibfnamefont {I.}~\bibnamefont
  {Cong}}, \bibinfo {author} {\bibfnamefont {S.}~\bibnamefont {Choi}},\ and\
  \bibinfo {author} {\bibfnamefont {M.~D.}\ \bibnamefont {Lukin}},\ }\bibfield
  {title} {\bibinfo {title} {Quantum convolutional neural networks},\ }\href
  {https://doi.org/10.1038/s41567-019-0648-8} {\bibfield  {journal} {\bibinfo
  {journal} {Nature Physics}\ }\textbf {\bibinfo {volume} {15}},\ \bibinfo
  {pages} {1273} (\bibinfo {year} {2019})}\BibitemShut {NoStop}%
\end{thebibliography}%

\clearpage
\onecolumngrid
\clearpage

\includepdf[
  pages=1,
  nup=1x1,
  fitpaper=true,
  turn=false,
  landscape=false,
  pagecommand={\thispagestyle{empty}}
]{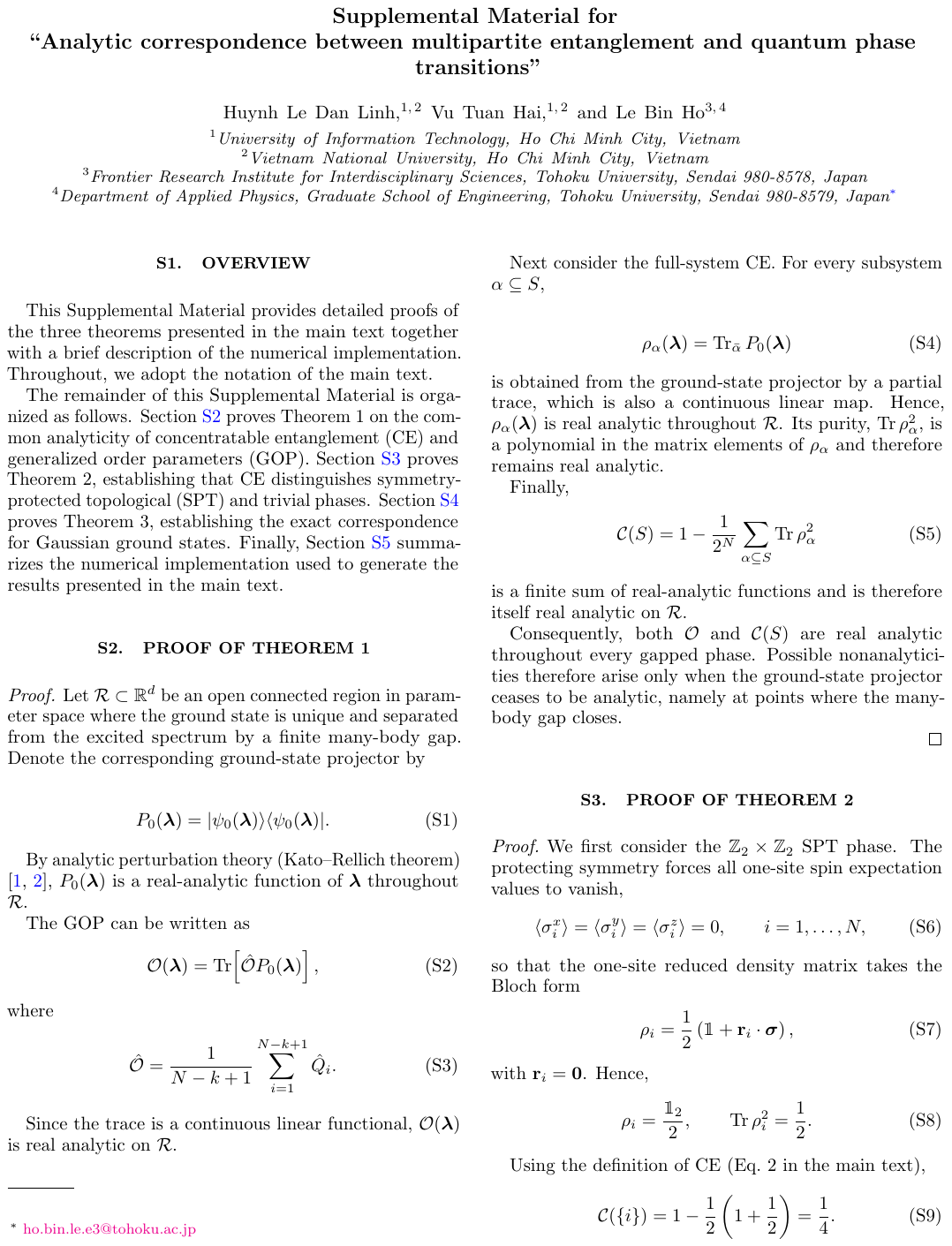}

\clearpage

\includepdf[
  pages=2,
  nup=1x1,
  fitpaper=true,
  turn=false,
  landscape=false,
  pagecommand={\thispagestyle{empty}}
]{sup.pdf}

\clearpage

\includepdf[
  pages=3,
  nup=1x1,
  fitpaper=true,
  turn=false,
  landscape=false,
  pagecommand={\thispagestyle{empty}}
]{sup.pdf}

\end{document}